\DeclareMathOperator{\diag}{diag}                                                           \usepackage{mathrsfs}										
\newtheorem{theorem}{Theorem}                						
\newtheorem{lemma}{Lemma}
\newtheorem{assumption}{Assumption}               						\newtheorem{remark}{Remark}
\newtheorem{objective}{Objective}
\newcommand{\numberset}{\mathbb}							
\newcommand{\R}{\numberset{R}}
\renewcommand{\vec}{\boldsymbol}								
					\DeclareMathAlphabet{\mathpzc}{OT1}{pzc}{m}{it}
\DeclareMathAlphabet{\mathcal}{OMS}{cmsy}{m}{n}
\title{\LARGE \bf
Robust Passivity-Based Control of Boost Converters in DC Microgrids$^\star$
}
\author{Michele Cucuzzella$^{1}$, Riccardo Lazzari$^{2}$, Yu Kawano$^{1}$, Krishna C. Kosaraju$^{1}$ and Jacquelien M.A. Scherpen$^{1}$
\thanks{$^\star$This work is supported by the EU Project \lq MatchIT' (project number 82203) and the Research Fund for the Italian Electrical System under the Contract Agreement between RSE S.p.A. and the Ministry of Economic Development - General Directorate for Nuclear Energy, Renewable Energy and Energy Efficiency in compliance with the Decree of March 8, 2006.}
\thanks{$^{1}$M. Cucuzzella, Y. Kawano, K.C. Kosaraju and J.M.A. Scherpen are with the Jan C. Wilems Center for Systems and Control, ENTEG, Faculty of Science and Engineering, University of Groningen, Nijenborgh 4, 9747 AG Groningen, the Netherlands 
        {\tt\small \{m.cucuzzella, y.kawano, k.c.kosaraju, j.m.a.scherpen\}@rug.nl}}%
\thanks{$^{2}$R. Lazzari is with the Department of Power Generation Technologies and Materials, RSE S.p.A., via Rubattino Raffaele 54, 20134 Milan, Italy
        {\tt\small Riccardo.Lazzari@rse-web.it}}%
}
\begin{document}
\maketitle
\thispagestyle{empty}
\pagestyle{empty}
\begin{abstract}
This work deals with the design of a robust and decentralized passivity-based control scheme for regulating the voltage of a DC microgrid through boost converters. 
A Krasovskii-type storage function is proposed and a (local) passivity property for DC microgrids comprising unknown \lq ZIP' (constant {impedance} \lq Z', constant {current} \lq I' and constant {power} \lq P') loads is established.
More precisely, the input port-variable of the corresponding passive map is equal to the first-time derivative of the control input. Then, the integrated input port-variable is used to shape the closed loop storage function such that it has a minimum at the desired equilibrium point. 
Convergence to the desired equilibrium is theoretically analyzed and the proposed control scheme is validated through experiments on a real DC microgrid.
\end{abstract}
%
%
%
%
\section{INTRODUCTION} 
Distributed Generation (DG) and the necessity of storing energy require fundamental transformations of the conventional power generation, transmission and distribution systems~\cite{ACKERMANN2001195}.
DG represents a conceptual solution to $i)$ enhance the integration of Renewable Energy Sources (RES) in order to reduce the dependency on fossil fuels and CO$_2$ emissions, $ii)$ increase the energy efficiency by reducing the transmission power losses, $iii)$  improve the service quality by enabling the operation of portions of the network  disconnected from the main grid and $iv)$ minimize the costs for electrifying remote areas or re-powering the existing power networks due to the ever increasing electric demand. 
A set of multiple DG Units (DGUs), loads and energy storage devices interconnected through power lines is identified in the literature as a \emph{microgrid} \cite{hatziargyriou2014microgrids}. 

In the last decades, due to the prevalence of Alternating Current~(AC) networks, the literature on microgrids mainly considered AC systems (see for instance~\cite{doi:10.1080/00207179.2015.1104555,7500071,GUI2018551,C_Cucuzzella_18_3} and the references therein).
However, the recent widespread use of RES as DGUs is motivating the design and operation of Direct Current (DC) microgrids~\cite{7268934}. 
Several devices (e.g. electric vehicles, electronic appliances, batteries and photovoltaic panels) can indeed be directly connected to a DC network avoiding lossy DC-AC conversion stages and the issues related to the frequency and reactive power control~\cite{PLANAS2015726}. Besides the development of industrial, commercial and residential DC distribution networks, some examples of existing or promising DC microgrid applications are ships, mobile military bases, trains, aircrafts and charging stations for electric vehicles.
For all these reasons, control of DC microgrids and, consequently, DC-DC power converters is gaining growing interest.

In DC microgrids, control schemes are usually designed to achieve voltage stabilization and current (or power) sharing (see for instance~\cite{DePersis2016,han8026170,cucuzzella2017robust,J_Trip2018,StrehlePfeiferKrebs2019_1000090285} and the references therein). 
However, the dynamics of the power converters are often neglected or described by linear models (e.g., buck converters).
Differently, in this letter we design a robust and decentralized passivity-based control scheme for regulating the voltage of a DC microgrid through \emph{boost} converters, the dynamics of which are \emph{nonlinear}. 
Regulating the voltage towards the nominal value is required to ensure a proper operation of the connected loads and guarantee the network stability.

\subsection{Literature Review and Main Contributions}
We provide now a brief comparison with some existing theoretical results dealing with the design of voltage controllers for boost converters. Simple tuning rules of passivity-preserving controllers are provided in \cite{1323174}, while stability in presence of bounded input is analyzed in \cite{Moreno2017}.
However, only constant impedance loads are considered, the network dynamics are neglected and in \cite{Moreno2017} the load resistance is assumed to be known. 
In \cite{7983406} and \cite{8618882}, Plug-and-Play voltage controllers are proposed. 
More precisely, the controller designed in \cite{7983406} is robust with respect to load uncertainties. However, the line dynamics are neglected and only local stability is established for the boost converter. In \cite{8618882} the microgrid stability is proved considering bounded input. However, only constant current loads are considered and the controller requires local information (including the load) and the value of the resistance of the lines interconnecting with the neighboring nodes. Under the assumption that the equilibrium point is known, a novel nonlinear control law that takes into account the constraints of the control action is proposed in \cite{Iovine}. 

We can now list the main contributions of this work:
\subsubsection{Nonlinear model} The considered microgrid model takes into account the nonlinear dynamics of boost converters and a possible meshed network topology, incorporating dynamic resistive-inductive lines and a general nonlinear load model (called \lq ZIP') including constant {impedance} \lq Z', constant {current} \lq I' and constant {power} \lq P'.
\subsubsection{Passivity framework} A Krasovskii-type storage function \cite{krasovskiicertain, 7846443} is proposed and a (local) passivity property for the considered DC microgrid is established.
More precisely, the input port-variable of the corresponding passive map is equal to the first-time derivative of the control input. Then, the integrated input port-variable is used to shape (\emph{input shaping} methodology~\cite{2018arXiv181102838C}) the closed loop storage function such that it has a minimum at the desired equilibrium point. Convergence to the desired equilibrium is established together with extremely simple tuning rules.
\subsubsection{Robustness} The proposed control scheme is decentralized and robust with respect to unknown loads and other parameter uncertainty (e.g., line and filter impedances).
\subsubsection{Validation} The proposed control strategy is verified through experiments on a real DC microgrid test facility at Ricerca sul Sistema Energetico (RSE), Milan, Italy, showing excellent closed-loop performance (see~\cite{7892743} and \cite{J_Cucuzzella_18} for more information about the experimental setup where we have performed our tests).

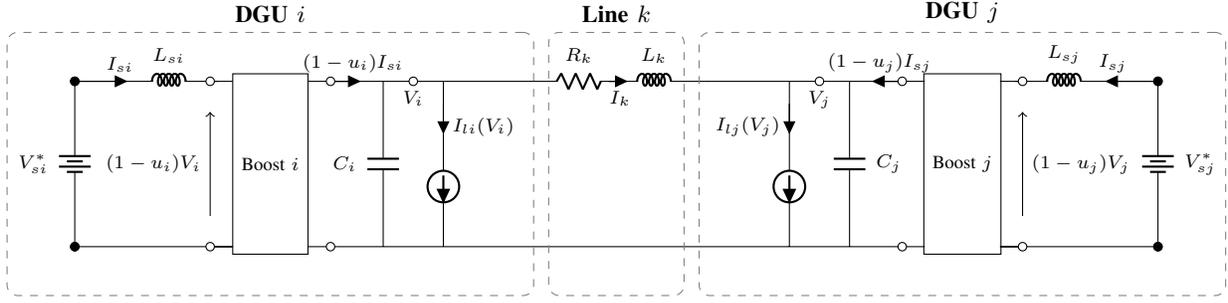
\begin{figure*}[t]
\begin{center}
\ctikzset{bipoles/length=0.7cm}
\begin{circuitikz}[scale=1, transform shape]
\ctikzset{current/distance=1}
\draw
node[] (Ti) at (0,0) {}
node[] (Tj) at ($(5.4,0)$) {}
node[ocirc] (Aibattery) at ([xshift=-4.5cm,yshift=1.1cm]Ti) {}
node[ocirc] (Bibattery) at ([xshift=-4.5cm,yshift=-1.1cm]Ti) {}
(Aibattery) to [battery, l_=\scriptsize{$V^\ast_{si}$},*-*] (Bibattery) {}
node [rectangle,draw, minimum width=1cm,minimum height=2.4cm] (boosti) at ($0.5*(Aibattery)+0.5*(Bibattery)+(2.6,0)$) {\scriptsize{Boost $i$}}
node[ocirc] (Ai) at ($(Aibattery)+(1.8,0)$) {}
node[ocirc] (Bi) at ($(Bibattery)+(1.8,0)$) {}
(Ai) to [short] ([xshift=0.3cm]Ai)
(Bi) to [short] ([xshift=0.3cm]Bi)
(Aibattery) 
to [short,i=\scriptsize{$I_{si}$}]($(Aibattery)+(0.6,0)$){}
to [L, l=\scriptsize{$L_{si}$}]($(Aibattery)+(1.9,0)$){}
(Bibattery) to [short] ([xshift=2.1cm]Bibattery)
node[ocirc] (AAi) at ($(Ai)+(1.6,0)$) {}
node[ocirc] (BBi) at ($(Bi)+(1.6,0)$) {}
(AAi) to [short] ([xshift=-0.3cm]AAi)
(BBi) to [short] ([xshift=-0.3cm]BBi)
(AAi) to [short,i=\scriptsize{$(1-u_i)I_{si}$}]($(AAi)+(0.3,0)$){}
(AAi) to [short] ([xshift=0.7cm]AAi)
(BBi) to [short] ([xshift=0.7cm]BBi);
\begin{scope}[shorten >= 10pt,shorten <= 10pt,]
\draw[<-] (Ai) -- node[left] {\scriptsize{$(1-u_{i})V_i$}} (Bi);
\end{scope};
\draw
($(Ti)+(0.0,1.1)$) node[anchor=north]{\scriptsize{$V_{i}$}}
($(Ti)+(0.0,1.1)$) node[ocirc](PCCi){}
($(Ti)+(0.4,1.1)$)--($(Ti)+(0.4,0.8)$) to [short,i>=\scriptsize{$I_{li}(V_i)$}]($(Ti)+(0.4,0.5)$)
to [I]($(Ti)+(0.4,-1.1)$)
($(Ti)+(-0.4,1.1)$) to [C, l_=\scriptsize{$C_{i}$}] ($(Ti)+(-0.4,-1.1)$)
($(Ti)+(-0.4,1.1)$) to [short] ($(Ti)+(1.5,1.1)$)
($(Ti)+(2.6,1.1)$) to [short,i_=\scriptsize{$I_{k}$}] ($(Ti)+(2.75,1.1)$)--($(Ti)+(2.7,1.1)$)
($(Ti)+(1.5,1.1)$)--($(Ti)+(1.9,1.1)$) to [R, l=\scriptsize{$R_{k}$}] ($(Ti)+(2.5,1.1)$) {}
to [L, l={\scriptsize{$L_{k}$}}, color=black]($(Tj)+(-1.5,1.1)$){}
to ($(Tj)+(-1.5,1.1)$){}
($(Tj)+(-1.5,1.1)$) to [short] ($(Tj)+(-1.2,1.1)$)--($(Tj)+(-0.8,1.1)$)
($(Tj)+(-0.8,1.1)$) to [short] ($(Tj)+(0.4,1.1)$)
($(Ti)+(-0.4,-1.1)$) to [short] ($(Tj)+(0.4,-1.1)$)
($(Tj)+(-0,1.1)$) node[anchor=north]{\scriptsize{$V_{j}$}}
($(Tj)+(-0,1.1)$) node[ocirc](PCCj){}
($(Tj)+(-0.4,1.1)$)--($(Tj)+(-0.4,0.8)$) to [short,i>_=\scriptsize{$I_{lj}(V_j)$}]($(Tj)+(-0.4,0.5)$)
to [I]($(Tj)+(-0.4,-1.1)$)
($(Tj)+(0.4,1.1)$) to [C, l=\scriptsize{$C_{j}$}] ($(Tj)+(0.4,-1.1)$)
node[ocirc] (Ajbattery) at ([xshift=4.5cm,yshift=1.1cm]Tj) {}
node[ocirc] (Bjbattery) at ([xshift=4.5cm,yshift=-1.1cm]Tj) {}
(Ajbattery) to [battery, l=\scriptsize{$V^\ast_{{sj}}$},*-*] (Bjbattery){}
node [rectangle,draw,minimum width=1cm,minimum height=2.4cm] (boostj) at ($0.5*(Ajbattery)+0.5*(Bjbattery)-(2.6,0)$) {\scriptsize{Boost $j$}}
node[ocirc] (Aj) at ($(Ajbattery)+(-1.8,0)$) {}
node[ocirc] (Bj) at ($(Bjbattery)+(-1.8,0)$) {}
(Aj) to [short] ([xshift=-0.3cm]Aj)
(Bj) to [short] ([xshift=-0.3cm]Bj)
(Ajbattery) to [short,i_=\scriptsize{$I_{sj}$}]($(Ajbattery)+(-0.6,0)$){}
($(Tj)+(2.6,1.1)$) to [L, l=\scriptsize{$L_{sj}$}]($(Ajbattery)+(-0.6,0)$){}
(Bjbattery) to [short] ([xshift=-2.1cm]Bjbattery)
node[ocirc] (AAj) at ($(Aj)+(-1.6,0)$) {}
node[ocirc] (BBj) at ($(Bj)+(-1.6,0)$) {}
(AAj) to [short] ([xshift=0.3cm]AAj)
(BBj) to [short] ([xshift=0.3cm]BBj)
(AAj) to [short,i_=\scriptsize{$(1-u_j)I_{sj}$}]($(AAj)+(-0.3,0)$){}
(AAj) to [short] ([xshift=-0.7cm]AAj)
(BBj) to [short] ([xshift=-0.7cm]BBj);
\begin{scope}[shorten >= 10pt,shorten <= 10pt,]
\draw[<-] (Aj) -- node[right] {\scriptsize{$(1-u_{j})V_j$}} (Bj);
\end{scope};
\draw
node [rectangle,draw,minimum width=7.0cm,minimum height=3.5cm,dashed,rounded corners,color=gray,label=\small\textbf{DGU $i$}] (DGUi) at ($0.5*(Aibattery)+0.5*(Bibattery)+(2.6,0)$) {}
node [rectangle,draw,minimum width=7.0cm,minimum height=3.5cm,dashed,rounded corners,color=gray,label=\small\textbf{DGU $j$}] (DGUj) at ($0.5*(Ajbattery)+0.5*(Bjbattery)-(2.6,0)$) {}
node [rectangle,draw,minimum width=1.8cm,minimum height=3.5cm,dashed,rounded corners,color=gray,label=\small\textbf{Line $k$}] (Lineij) at ($0.5*(DGUi.center)+0.5*(DGUj.center)+(0,0)$){};
\end{circuitikz}
\caption{Electrical scheme of a typical boost-based DC microgrid composed of two DGUs connected by a line.}
\label{fig:microgrid1}
\end{center}
\end{figure*}

\subsection{Outline}
The present letter is organized as follows. The microgrid model is described in Section~\ref{sec:preliminaries}, while the control objective is formulated in Section~\ref{sec:formulation}. In Section~\ref{sec:control}, the proposed control scheme is designed and the stability of the controlled microgrid analyzed. In Section~\ref{sec:experiments}, the proposed control scheme is validated through experiments on a real DC microgrid and, finally, conclusions are gathered in Section VI.
\subsection{Notation}
Let $\vec{0}$ be the vector of all zeros of suitable dimension and let $\mathds{1}_n \in \R^n$ be the vector containing all ones. The $i$-th element of vector $x$ is denoted by $x_i$. 
 A steady state solution to system $\dot x = \zeta(x)$, is denoted by $\overline x$,  i.e., $\boldsymbol{0} = \zeta(\overline x)$. 
 A constant signal is denoted by $x^\ast$. Given a vector $x\in \R^n$, $[x]\in \R^{n\times n}$ indicates the diagonal matrix whose diagonal entries are the components of $x$. Let \lq$\circ$' denote the Hadamard product, i.e., given vectors $ x, y \in \R^n$, $( x \circ  y) \in \R^n$ is a vector with elements $( x \circ  y)_i := x_i y_i$ for all $i=1,\dots,n$. 
 Let $A \in \R^{n \times n}$ be a matrix. In case $A$ is a positive definite (positive semi-definite) matrix, we write $A \succ 0$  ($A \succeq 0$). The $n \times n$ identity matrix is denoted by $\mathds{I}_n$.
 Given a set $\Omega$, $|\Omega|$ represents the cardinality of $\Omega$.
\section{DC Microgrid Model}
\label{sec:preliminaries}

The DC microgrid is represented by a connected and undirected graph $\mathcal{G} = (\mathcal{V},\mathcal{E})$, where   $\mathcal{V} = \{1,...,n\}$ is the set of nodes and $\mathcal{E} \subseteq \mathcal{V}\times\mathcal{V}$ represents the set of the resistive-inductive lines interconnecting the nodes.
Each node, which we call \emph{Distributed Generation Unit}\footnote{Note that, we consider generation units only for the sake of simplicity and without loss of generality. In the experiments (see Section \ref{sec:experiments}), the controlled nodes are indeed storage units, i.e., batteries.} (DGU), includes a DC-DC boost converter supplying an unknown load.
A schematic electrical diagram of the considered DC network including two DGUs interconnected by a power line  is illustrated in Fig.~\ref{fig:microgrid1} (see also Table~\ref{tab:symbols} for the description of the used symbols).

\begin{table}
\begin{center}
\caption{Description of the used symbols}\label{tab:symbols}
\begin{tabular}{cl}
\toprule
&{\bf State variables}\\
\midrule
$I_s$						& Generated current\\
$V$						& Load voltage\\
$I$ 						& Line current \\
\midrule
&{\bf Inputs}\\
\midrule
$u$						& Control input (duty cycle) \\
$V^\ast_s$					& Voltage source\\
\midrule
&{\bf ZIP loads}\\
\midrule
$G^\ast_l$						& Load conductance\\
$I^\ast_l$						& Load current \\
$P^\ast_l$						& Load power \\
\midrule
&{\bf Filter and line parameters}\\
\midrule
$L_s$						& Boost filter inductance\\
$C$						& Boost filter capacitor\\
$R$						& Line resistance\\
${L}$						& {Line inductance}\\
\bottomrule
\end{tabular}
\end{center}
\end{table}

By applying the Kirchhoff's laws, the \emph{average}\footnote{Under the condition that the Pulse Width Modulation (PWM) frequency is sufficiently high, the state of the system can be replaced by the average state representing the average inductor currents and capacitor voltages. Consequently, the switching control input is replaced by the so-called duty cycle of the converter.} governing dynamic equations\footnote{For the sake of simplicity, the dependence of all the variables on time $t$ is omitted when it is clear from the context.} of the node $i \in \mathcal{V}$ are the following:
\begin{align}
\begin{split}
\label{eq:plant_i1}
L_{si}\dot{I}_{si} &= - \left(1-u_i\right) V_i + V^\ast_{si} \\
C_{i}\dot{V}_i &=  \left(1-u_i\right) I_{si} - {I}_{li}(V_i) -  { \sum_{k \in \mathcal{E}_i}^{}I_{k}},
\end{split}
\end{align}
where $I_{si}: \R_{\geq0} \rightarrow \R, I_k: \R_{\geq0} \rightarrow \R, V_i:\R_{\geq0} \rightarrow \R_{>0}, I_{li}(V_i):\R_{>0} \rightarrow \R_{\geq0}, u_i : \R_{\geq0} \rightarrow [0,1)$ and $V^\ast_{si}, L_{si}, C_i \in \R_{>0}$. Moreover, $\mathcal{E}_i$ is the set of power lines connected to the DGU~$i$ and $I_{k}$ is the current flowing on the line $k \in \mathcal{E}_i$.
Let $k$ be the power line interconnecting DGUs $i, j \in \mathcal{V}$. Then, the dynamic of $I_{k}$ in~\eqref{eq:plant_i1} is given by
\begin{equation}
\label{eq:plant_i2}
L_{k}{\dot I_{k}} = \left(V_i - V_j\right) - R_{k} I_{k},
\end{equation}
with $L_{k},R_k \in \R_{>0}$.
Moreover, the term $I_{li}(V_i)$ in \eqref{eq:plant_i1} represents the current demanded\footnote{The results presented in this work hold also in case of the so-called \emph{net generating} loads, i.e., $I_{li}(V_i)<0$.} by the load $i \in \mathcal{V}$ and (generally) depends on the node voltage $V_i$. 
In this work, we consider a general load model including the parallel combination of the following load components:
\begin{enumerate}
\item{constant impedance: $I_{li} = G^\ast_{li}V_i$, with $G^\ast_{i}\in\R_{>0}$,}
\item{constant current: $I_{li} = I^\ast_{li}$, with $I^\ast_{li}\in\R_{\geq0}$, and}
\item{constant power: $I_{li} = V_i^{-1}P^\ast_{li}$, with $P^\ast_{li}\in\R_{\geq0}$.}
\end{enumerate}
To refer to the load types above, the letters \lq Z', \lq I' and \lq P', respectively, are often used in the literature \cite{DePersis2016}.
Therefore, in presence of the so-called ZIP loads, $I_{li}(V_i)$ in \eqref{eq:plant_i1} is given by
\begin{equation}
\label{eq:plant_i3}
I_{li}(V_i) = G^\ast_{li}V_i + I^\ast_{li} + V_i^{-1}P^\ast_{li}.
\end{equation}
The symbols used in  \eqref{eq:plant_i1}--\eqref{eq:plant_i3} are described in Table~\ref{tab:symbols}.

We represent the microgrid topology by using its corresponding incidence matrix $ D \in \R^{n \times |\mathcal{E}|}$. The ends of edge $k \in \mathcal{E}$ are arbitrarily labeled with a $+$ and a $-$. More precisely, one has that
\begin{equation*}
\label{eq:incidence}
D_{ik}=
\begin{cases}
+1 \quad &\text{if $i$ is the positive end of $k$}\\
-1 \quad &\text{if $i$ is the negative end of $k$}\\
0 \quad &\text{otherwise}.
\end{cases}
\end{equation*}
The overall microgrid system \eqref{eq:plant_i1}, \eqref{eq:plant_i2} in presence of ZIP loads \eqref{eq:plant_i3} can now be written compactly for all nodes $i \in \mathcal{V}$ as follows:
\begin{align}
\label{eq:plant}
	\begin{split}
		 {L_s} {\dot{I}_{s}} &= - \left(\mathds{1}_n - u\right) \circ  {V} +  {V^\ast_{s}}\\
		 {C} {\dot{V}} &=   \left(\mathds{1}_n - u\right) \circ  {I_s} - G^\ast_l V - I^\ast_l - [V]^{-1}P^\ast_l + D I\\
		 L \dot I &= -D^\top V - RI,
	\end{split}
\end{align}
where ${I_s}: \R_{\geq0} \rightarrow \R^n$, $V:\R_{\geq0} \rightarrow \R^n_{>0}$, $I: \R_{\geq0} \rightarrow \R^{|\mathcal{E}|}$, $u:\R_{\geq0} \rightarrow [0,1)^n$, $V^\ast_{s}\in \R^n_{>0}$ and $I^\ast_l, P^\ast_l \in \R^n_{\geq0}$. Moreover, the matrices $ {L_s},  {C}, G^\ast_l, L$ and $R$ have appropriate dimensions and are constant, positive definite and diagonal.
\section{Problem Formulation: Voltage Regulation}
\label{sec:formulation}
In this section, we formulate the control objective aiming at regulating the voltage of a boost-based DC microgrid.
First, we notice that for given $u^\ast, V^\ast_s, G^\ast_l, I^\ast_l$ and $P^\ast_l$, a steady state solution $(\overline I_s, \overline V, \overline I)$ to system~\eqref{eq:plant} satisfies
\begin{subequations}\label{eq:steady_state}
\begin{align}
		\overline{V} &=  \left( \mathds{I}_n -[u^\ast] \right)^{-1} V_s^\ast \label{eq:ss1}\\
		 \left(\mathds{I}_n - [u^\ast] \right) \overline I_s &= G^\ast_l \overline V + I^\ast_l + [\overline V]^{-1}P^\ast_l - D \overline I \label{eq:ss2}\\
		 \overline{I} &= -R^{-1}D^\top\overline V \label{eq:ss3},
\end{align}
\end{subequations}
where from \eqref{eq:ss1} it follows that the boost output voltage $\overline V_i$ is higher than the  voltage source $V^\ast_{si}, i \in \mathcal{V}$, while \eqref{eq:ss2} implies\footnote{The incidence matrix $D$, satisfies $\mathds{1}_n^\top D = \boldsymbol{0}$.} that \emph{current balance} is achieved at the steady state, i.e., the total current $\mathds{1}^\top  \left(\mathds{I}_n - [u^\ast] \right) \overline I_s$ injected by the boost converters is equal to the total current $\mathds{1}^\top(G^\ast_l \overline V + I^\ast_l + [\overline V]^{-1}P^\ast_l)$ demanded by the ZIP loads.
Moreover, in order to guarantee a proper functioning of the connected loads, it is required that the current balance is achieved at the desired voltage value.
Consequently, before formulating the control objective, we introduce the following assumption on the existence of a desired reference voltage for each DGU:
\begin{assumption}{\bf(Desired voltage)}
\label{ass:desired_voltages}
There exists a constant desired reference voltage $V_{di}^{\ast} \geq V_{si}^\ast$ for all $i \in \mathcal{V}$.
\end{assumption}

Given ${V}^\ast_d = [V_{d1}^\ast, \dots, V_{dn}^\ast]^T$, the control objective is then formulated as follows:
\begin{objective}{\bf(Voltage regulation)}
\label{obj:voltage_regulation}
\begin{equation*}
\label{eq:voltage_balancing}
\lim_{t \rightarrow \infty}   V(t) =  \overline{  V} =   V^{\ast}_d.
\end{equation*}
\end{objective}
Moreover, in order to permit the controller design in the next section, the following assumption is introduced on the available informations:

\begin{assumption}{\bf(Available informations)}
\label{ass:available_information}
The state variables $I_{si}$, $V_i$ and the voltage source $V^\ast_{si}$ are locally available only at the DGU $i$. 
\end{assumption}
Consequently, the control scheme we design in Section~\ref{sec:control} to achieve Objective 1 needs to be fully \emph{decentralized}, increasing the practical applicability of the proposed approach.

\begin{remark}{\bf(Microgrid uncertainty)}
\label{rmk:varying_uncertainty_current_demand}
Note that, according to Assumption~\ref{ass:available_information}, the parameters $I^\ast_l, P^\ast_l, G^\ast_l, L_s, L, C, R$ of the ZIP loads, lines and boost converters are not known. As a consequence, we need to design a control scheme that is robust with respect to the system uncertainty.
\end{remark}
\section{The Proposed Solution}
\label{sec:control}
In this section, we introduce the key aspects of the proposed decentralized passivity-based control scheme aiming at achieving Objective 1. More precisely, we first augment system \eqref{eq:plant} with additional dynamics. Secondly, we propose a Krasovskii-type storage function \cite{krasovskiicertain, 7846443} and establish a (local) passivity property for the augmented system. The input port-variable of the corresponding passive map is equal to the first-time derivative of the control input. Then, we use the integrated input port-variable to shape the closed loop storage function such that it has a minimum at the desired equilibrium point.

Consider the following \emph{auxiliary}\footnote{The state variables and the control input of the \emph{auxiliary} system are $I_s, V, I, \dot{I}_s, \dot V, \dot I, u$ and $\upsilon_c$, respectively.} system:
\begin{subequations}
\label{eq:plant_extended}
\begin{align}
		{L_s} {\dot{I}_{s}} = &- \left(\mathds{1}_n - u\right) \circ  {V} +  {V^\ast_{s}} \label{eq:pe1}\\
		 {C} {\dot{V}} =   &\left(\mathds{1}_n - u\right) \circ  {I_s} - G^\ast_l V - I^\ast_l - [V]^{-1}P^\ast_l + D I \label{eq:pe2}\\
		 L \dot I = &-D^\top V - RI \label{eq:pe3}\\	
		 {L_s} {\ddot{I}_{s}} = &- \left(\mathds{1}_n - u\right) \circ  {\dot V} +  \upsilon_c \circ V \label{eq:pe4}\\
		 {C} {\ddot{V}} =  &\left(\mathds{1}_n - u\right) \circ  {\dot I_s} - \upsilon_c \circ I_s \nonumber\\ &- \left( G^\ast_l - [V]^{-2}[P^\ast_l] \right) \dot V + D \dot I \label{eq:pe5}\\
		 L \ddot I = &-D^\top \dot V - R \dot I \label{eq:pe6}\\	
		\dot u = &~\upsilon_c, \label{eq:pe7}
\end{align}
\end{subequations}
which includes also the dynamics of the first-time derivative of the state and input of system \eqref{eq:plant}. 

Let the vector $z := (I_s^\top, V^\top, I^\top, \dot{I}_s^\top, \dot V^\top, \dot I^\top, u^\top)^\top \in \mathcal{Z} := \left\{ z \in \R^{5n+2|\mathcal{E}|}: V \in \R^n_{>0}, u \in [0,1)^n\right\}$ denote the state of the auxiliary system \eqref{eq:plant_extended}.
In order to establish a passivity property for system \eqref{eq:plant_extended}, we first introduce the following set:
\begin{equation*}
{\mathcal{Z}}_{\mathrm{ZIP}} := \left\{ z \in \mathcal{Z} : G_l^\ast - [V]^{-2}[P_l^\ast] \succ 0  \right\}.
\end{equation*}
Then, the following result can be proved.

\begin{lemma}{\bf(Passivity property)}
\label{lemma:1}
System~\eqref{eq:plant_extended} is passive with respect to the supply rate $\upsilon_c^\top \left( \dot I_s \circ V - \dot V \circ I_s \right)$ and the storage function
\begin{equation}
\label{eq:S1}
S(\dot{I}_s, \dot V, \dot I) = \frac{1}{2}\dot{I}_s^\top L_s \dot{I}_s + \frac{1}{2}\dot{V}^\top C \dot{V} + \frac{1}{2}\dot{I}^\top L \dot{I},
\end{equation}
for all the trajectories $z \in {\mathcal{Z}}_{\mathrm{ZIP}}$.
\end{lemma}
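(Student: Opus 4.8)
The strategy is to differentiate the storage function $S$ along trajectories of the auxiliary system \eqref{eq:plant_extended} and show that $\dot S$ equals the supply rate $\upsilon_c^\top(\dot I_s \circ V - \dot V \circ I_s)$ minus a nonnegative dissipation term. First I would compute
\begin{equation*}
\dot S = \dot{I}_s^\top L_s \ddot{I}_s + \dot{V}^\top C \ddot{V} + \dot{I}^\top L \ddot{I},
\end{equation*}
and substitute the right-hand sides of \eqref{eq:pe4}, \eqref{eq:pe5} and \eqref{eq:pe6}. Collecting terms, the cross-terms involving $(\mathds{1}_n - u)\circ\dot V$ in $L_s\ddot I_s$ and $(\mathds{1}_n - u)\circ\dot I_s$ in $C\ddot V$ cancel against each other (using that the Hadamard product is commutative and that $a^\top(b\circ c)=b^\top(a\circ c)$), exactly as in the usual passivity argument for the non-differentiated plant. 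Similarly, the terms $\dot V^\top D \dot I$ and $-\dot I^\top D^\top \dot V$ from \eqref{eq:pe5} and \eqref{eq:pe6} cancel. What remains is
\begin{equation*}
\dot S = \dot{I}_s^\top\!\left(\upsilon_c \circ V\right) - \dot V^\top\!\left(\upsilon_c \circ I_s\right) - \dot V^\top\!\left(G_l^\ast - [V]^{-2}[P_l^\ast]\right)\dot V - \dot I^\top R \dot I.
\end{equation*}

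The first two terms regroup, again via $a^\top(b\circ c)=c^\top(a\circ b)=\cdots$, into $\upsilon_c^\top(\dot I_s\circ V - \dot V\circ I_s)$, which is precisely the claimed supply rate. The remaining two terms are $-\dot V^\top(G_l^\ast - [V]^{-2}[P_l^\ast])\dot V - \dot I^\top R \dot I$, so
\begin{equation*}
\dot S = \upsilon_c^\top\!\left(\dot I_s \circ V - \dot V \circ I_s\right) - \dot V^\top\!\left(G_l^\ast - [V]^{-2}[P_l^\ast]\right)\dot V - \dot I^\top R \dot I \le \upsilon_c^\top\!\left(\dot I_s \circ V - \dot V \circ I_s\right),
\end{equation*}
where the inequality holds precisely on $\mathcal{Z}_{\mathrm{ZIP}}$, since there $G_l^\ast - [V]^{-2}[P_l^\ast]\succ 0$ and $R\succ 0$ by assumption. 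Integrating over $[0,t]$ gives the dissipation inequality defining passivity, and the proof is complete.

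The only genuinely delicate point is the restriction to $\mathcal{Z}_{\mathrm{ZIP}}$: the constant-power load contributes the \emph{indefinite} term $+[V]^{-2}[P_l^\ast]$ to the effective load conductance appearing in \eqref{eq:pe5}, so without the hypothesis $G_l^\ast - [V]^{-2}[P_l^\ast]\succ 0$ the quadratic form in $\dot V$ need not be sign-definite and dissipativity can fail; this is why the passivity property is only \emph{local}, valid for trajectories remaining in $\mathcal{Z}_{\mathrm{ZIP}}$. The rest is a bookkeeping exercise in Hadamard-product identities, and I would state the identity $x^\top(y\circ w)=y^\top(x\circ w)$ once at the outset to streamline the cancellations. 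I should also note in passing that $S$ is positive semidefinite (indeed positive definite in its arguments $\dot I_s,\dot V,\dot I$ since $L_s,C,L\succ 0$), which together with $\dot S \le$ supply rate is the definition of passivity being invoked here.
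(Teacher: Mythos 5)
Your proof is correct and follows essentially the same route as the paper: differentiate $S$ along \eqref{eq:pe4}--\eqref{eq:pe6}, use the symmetry of $a^\top(b\circ c)$ to cancel the $(\mathds{1}_n-u)$ cross-terms and the $D$ terms, collect $\upsilon_c^\top(\dot I_s\circ V-\dot V\circ I_s)$, and bound the remaining dissipation term using $G_l^\ast-[V]^{-2}[P_l^\ast]\succ 0$ on $\mathcal{Z}_{\mathrm{ZIP}}$ together with $R\succ 0$. The paper merely states the resulting expression for $\dot S$ without showing the intermediate cancellations, so your write-up is a faithful (and more explicit) version of the same argument.
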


\begin{proof}
The storage function $S$ in \eqref{eq:S1} satisfies
\begin{align*}
\begin{split}
\dot S &= - \dot V^\top \left( G^\ast_l - [V]^{-2}[P^\ast_l] \right) \dot V - \dot I^\top R \dot I \\
& \hspace{4mm} + \upsilon_c^\top \left( \dot I_s \circ V - \dot V \circ I_s \right)\\
 &\leq \upsilon_c^\top \left( \dot I_s \circ V - \dot V \circ I_s \right),
\end{split}
\end{align*}
along the solutions $z \in {\mathcal{Z}}_{\mathrm{ZIP}}$ to system~\eqref{eq:plant_extended}, which concludes the proof.
\end{proof}
\begin{remark}{\bf (Insights on the proposed storage function $S$)}
\label{rm:S}
The storage function $S$ in \eqref{eq:S1} depends on the states $\dot{I}_s, \dot V, \dot I$. This implies that $S$ depends also on $I_s, V, I$ and $u$, i.e., the entire state of the auxiliary system~\eqref{eq:plant_extended}. This is evident from replacing $\dot{I}_s, \dot V, \dot I$ by the corresponding dynamics \eqref{eq:pe1}--\eqref{eq:pe3}, or rewriting $S$ as follows:
\begin{align*}
\begin{split}
S(z) =&~\frac{1}{4}\left( \dot{I}_s^\top L_s \dot{I}_s + \dot{V}^\top C \dot{V} + \dot{I}^\top L \dot{I}\right)\\
&+\frac{1}{4}\left( f_{{I}_s}^\top L^{-1}_s f_{{I}_s} + f_{V}^\top C^{-1} f_{V} + f_{I}^\top L^{-1} f_{I}\right),
\end{split}
\end{align*}
where $f_{{I}_s}: \R^n_{>0} \times [0,1)^n \rightarrow \ \R^n$, $f_{V}: \R^n \times \R^n_{>0} \times \R^{|\mathcal{E}|} \times [0,1)^n \rightarrow \ \R^n$, and $f_{I}: \R^n_{>0} \times \R^{|\mathcal{E}|} \rightarrow \ \R^{|\mathcal{E}|}$ represent the right-hand sides of \eqref{eq:pe1}--\eqref{eq:pe3}, respectively.
Moreover, it will be shown in Theorem~\ref{th:1} that using \eqref{eq:S1} to design the controller permits, differently from \cite{Moreno2017} and \cite{8618882}, the achievement of Objective~1 despite the system uncertainty (see Remark~\ref{rmk:varying_uncertainty_current_demand}). However, the cost of designing a robust controller is the need of information about the first-time derivative of the signals $I_s$ and $V$ (see Remark~\ref{rm:robustness}).
\end{remark}
\smallskip
Before designing the controller and introducing the main result of this work, to be able to achieve Objective 1, we show that a unique steady state solution to system~\eqref{eq:plant_extended}
always exists.
\smallskip
\begin{lemma}{\bf(Existence of a unique steady state solution)}
\label{lemma:ss_solution}
Let Assumption \ref{ass:desired_voltages} hold. Given $\upsilon_c = 0$ and $V^\ast_{di} > \sqrt{P^\ast_{li}/G^\ast_{li}}$, for all $i \in \mathcal{V}$, there exists a unique steady state solution $\overline z = (\overline I_s,  \overline V = V^\ast_d, \overline I, \vec0, \vec0, \vec0, \overline u) \in {\mathcal{Z}}_{\mathrm{ZIP}}$ to system~\eqref{eq:plant_extended}, satisfying
\begin{align}
\label{eq:plant_extended_ss}
	\begin{split}
		\overline u = &~\mathds{1}_n -[V^\ast_d]^{-1} V_s^\ast\\
		 \left(\mathds{I}_n - [\overline u] \right) \overline I_s = &~G^\ast_l  V^\ast_d + I^\ast_l + [V^\ast_d]^{-1}P^\ast_l - D \overline I\\
		 \overline{I} = & -R^{-1}D^\top V^\ast_d\\	
		 \vec0 = &~{\dot V} \\
		 \vec0 = &~{\dot I_s}\\
		 \vec0 = &~\dot I\\	
		\vec0 = &~\upsilon_c,
	\end{split}
\end{align}
\end{lemma}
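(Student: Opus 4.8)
The plan is to compute the equilibria of the auxiliary system \eqref{eq:plant_extended} directly and show that, once $\overline V = V^\ast_d$ is imposed, every remaining component is pinned down uniquely. First I would set $\upsilon_c = \vec0$ and require all state derivatives to vanish. Vanishing of the derivatives of $I_s$, $V$, $I$ turns \eqref{eq:pe1}--\eqref{eq:pe3} into exactly \eqref{eq:steady_state} with $u^\ast = \overline u$. Imposing $\overline V = V^\ast_d$, equation \eqref{eq:ss1} becomes $\left(\mathds{I}_n - [\overline u]\right)V^\ast_d = V^\ast_s$, whose unique solution is $\overline u = \mathds{1}_n - [V^\ast_d]^{-1}V^\ast_s$; Assumption~\ref{ass:desired_voltages} gives $0 \le 1 - V^\ast_{si}/V^\ast_{di} < 1$, so $\overline u \in [0,1)^n$ and $\mathds{I}_n - [\overline u] = [V^\ast_d]^{-1}[V^\ast_s]$ is invertible.

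Next I would exploit the three second-order equations to force the derivative-state components to zero. Vanishing of the derivative of $\dot I_s$ in \eqref{eq:pe4} with $\upsilon_c = \vec0$ gives $\left(\mathds{I}_n - [\overline u]\right)\overline{\dot V} = \vec0$, hence $\overline{\dot V} = \vec0$ by the invertibility just established; then \eqref{eq:pe6} gives $R\,\overline{\dot I} = \vec0$, hence $\overline{\dot I} = \vec0$ since $R \succ 0$; and \eqref{eq:pe5} with $\upsilon_c = \vec0$, $\overline{\dot V} = \vec0$, $\overline{\dot I} = \vec0$ gives $\left(\mathds{I}_n - [\overline u]\right)\overline{\dot I_s} = \vec0$, hence $\overline{\dot I_s} = \vec0$; finally \eqref{eq:pe7} is consistent with $\dot u = \upsilon_c = \vec0$. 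Equation \eqref{eq:ss3} then yields $\overline I = -R^{-1}D^\top V^\ast_d$ uniquely, and substituting $\overline u$, $\overline V = V^\ast_d$ and $\overline I$ into \eqref{eq:ss2} and inverting $\mathds{I}_n - [\overline u]$ yields $\overline I_s$ uniquely. Collecting these identities reproduces \eqref{eq:plant_extended_ss}; since every component was obtained by a unique determination, this is the only equilibrium with $\overline V = V^\ast_d$.

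Finally I would check $\overline z \in \mathcal{Z}_{\mathrm{ZIP}}$. Membership in $\mathcal{Z}$ is immediate because $\overline V = V^\ast_d \in \R^n_{>0}$ and $\overline u \in [0,1)^n$. For the ZIP condition, since $G^\ast_l$, $[V^\ast_d]$ and $[P^\ast_l]$ are diagonal, $G^\ast_l - [\overline V]^{-2}[P^\ast_l] \succ 0$ is equivalent to $G^\ast_{li} - (V^\ast_{di})^{-2}P^\ast_{li} > 0$ for every $i \in \mathcal{V}$, i.e. $(V^\ast_{di})^2 > P^\ast_{li}/G^\ast_{li}$, which (using $V^\ast_{di} > 0$ and $G^\ast_{li} > 0$) is precisely the standing hypothesis $V^\ast_{di} > \sqrt{P^\ast_{li}/G^\ast_{li}}$.

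The computation is elementary linear algebra once the cascade structure of \eqref{eq:plant_extended_ss} is exposed, so there is no real obstacle; the only points requiring care are the order of the deductions — one must first extract $\overline u$ from \eqref{eq:ss1} and invoke Assumption~\ref{ass:desired_voltages} to know $\overline u \in [0,1)^n$ before using invertibility of $\mathds{I}_n - [\overline u]$ to conclude $\overline{\dot V} = \overline{\dot I} = \overline{\dot I_s} = \vec0$ — and verifying that the three extra equations \eqref{eq:pe4}--\eqref{eq:pe6} are genuinely consumed by these deductions and impose nothing further on $(\overline I_s,\overline V,\overline I,\overline u)$.
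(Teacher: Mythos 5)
Your proposal is correct and follows the same route as the paper, whose proof is simply the one-line observation that the steady state is obtained by setting the left-hand side of \eqref{eq:plant_extended} to zero; you have merely made explicit the cascade of unique determinations (first $\overline u$ from \eqref{eq:pe1} with $\overline V = V^\ast_d$, then $\overline{\dot V}=\overline{\dot I}=\overline{\dot I}_s=\vec0$ from \eqref{eq:pe4}--\eqref{eq:pe6}, then $\overline I$ and $\overline I_s$) together with the membership check $\overline z \in \mathcal{Z}_{\mathrm{ZIP}}$, which the paper leaves implicit. No gaps.
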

\begin{proof}
The proof follows from setting the left-hand-side of system \eqref{eq:plant_extended} to zero.
\end{proof}
\smallskip
We can now show the main result of this paper concerning the design of a controller that (provably) stabilizes system~\eqref{eq:plant_extended} achieving Objective 1.

\begin{figure*}
\begin{center}
\ctikzset{bipoles/length=0.7cm}
\begin{circuitikz}[scale=1, transform shape]
\ctikzset{current/distance=1}
\draw
node[] (Ti2) at (0,0) {}
node[] (Tj2) at ($(5.4,0)$) {}
node[ocirc] (Aibattery2) at ([xshift=-4.5cm,yshift=1.1cm]Ti2) {}
node[ocirc] (Bibattery2) at ([xshift=-4.5cm,yshift=-1.1cm]Ti2) {}
(Aibattery2) to [battery, l_=\scriptsize{$V^\ast_{s2}$},*-*] (Bibattery2) {}
node [rectangle,draw, minimum width=1cm,minimum height=2.4cm] (boosti2) at ($0.5*(Aibattery2)+0.5*(Bibattery2)+(2.5,0)$) {}
node (Ai2) at ($(Aibattery2)+(2.2,0)$) {}
node (Bi2) at ($(Bibattery2)+(2.2,0)$) {}
(Aibattery2) ($(Aibattery2)+(0.0,0)$) {}
to [short,i=\scriptsize{$I_{s2}$}]($(Aibattery2)+(0.6,0)$){}
to [L, l=\scriptsize{$L_{s2}$}]($(Aibattery2)+(2,0)$){}
(Bibattery2) to [short] ([xshift=2.0cm]Bibattery2)
node (AAi2) at ($(Ai2)+(1.6,0)$) {}
node (BBi2) at ($(Bi2)+(1.6,0)$) {}
(AAi2) to [short] ([xshift=0.5cm]AAi2)
(BBi2) to [short] ([xshift=0.7cm]BBi2);
\begin{scope}[shorten >= 10pt,shorten <= 10pt,]
\end{scope};
\draw
($(Ti2)+(-0.5,1.1)$) to [short] ($(Ti2)+(-1.5,1.1)$)
($(Ti2)+(-0.5,-1.1)$) to [short] ($(Ti2)+(-1.5,-1.1)$)
($(Ti2)+(-0.5,1.1)$) node[anchor=south]{\scriptsize{\color{black}{$V_2$}}}
($(Ti2)+(-0.5,1.1)$) node[ocirc, color=black](PCCi2){}
($(AAi2)+(0.5,0)$) to [R, l=\scriptsize{$R_{{12}}$}] ($(AAi2)+(1.6,0)$) {}
to [L, l=\scriptsize{$L_{{12}}$}]($(AAi2)+(2.3,0)$){}
($(AAi2)+(2.3,0)$) to [short] ($(AAi2)+(2.7,0)$)
($(AAi2)+(2.7,0)$) node[anchor=south]{\scriptsize{\color{black}{$V_1$}}}
($(AAi2)+(2.7,0)$) node[ocirc, color=black](node1){}
to [I, l_=\scriptsize{$I_{l1}(V_1)$}]($(Ti2)+(2,-1.1)$){}
($(Ti2)+(-0.5,1.1)$) to [C, l_=\scriptsize{$C_{2}$}] ($(Ti2)+(-0.5,-1.1)$)
($(AAi2)+(3,0)$) to [R, l=\scriptsize{$R_{{13}}$}] ($(AAi2)+(4.1,0)$) {}
to [L, l=\scriptsize{$L_{{13}}$}]($(AAi2)+(4.8,0)$){}
($(AAi2)+(4.8,0)$) to [short] ($(AAi2)+(5.2,0)$)
($(AAi2)+(3,0)$) to [short] ($(AAi2)+(2.7,0)$)
($(AAi2)+(5.2,0)$) node[anchor=south]{\scriptsize{\color{black}{$V_3$}}}
($(AAi2)+(5.2,0)$) node[ocirc, color=black](node3){}
(node3) to [short] ($(Tj2)+(-0.6,1.1)$)
($(Ti2)+(0,-1.1)$) to [short] ($(Tj2)+(0,-1.1)$)
($(Tj2)+(1.5,1.1)$) node[anchor=south]{\scriptsize{\color{black}{$V_4$}}}
($(Tj2)+(1.5,1.1)$) node[ocirc, color=black](node4){}
 (node3) to [I, l=\scriptsize{$I_{l3}(V_3)$}] ($(Tj2)+(-0.9,-1.1)$)
(node3) to [R, l=\scriptsize{$R_{{34}}$}] ($(node3)+(1.4,0)$) {}
to [L, l=\scriptsize{$L_{{34}}$}]($(node3)+(1.8,0)$){}
($(node3)+(1.8,0)$) to [short] (node4)
($(Tj2)+(1.5,-1.1)$) to [short] ($(Tj2)+(0,-1.1)$)
($(Tj2)+(1.5,-1.1)$) to [short] ($(Tj2)+(2.5,-1.1)$)
(node4) to [short] ($(node4)+(1,0)$)
($(Tj2)+(1.5,1.1)$) to [C, l=\scriptsize{$C_{4}$}] ($(Tj2)+(1.5,-1.1)$)
node[ocirc] (Ajbattery2) at ([xshift=5.5cm,yshift=1.1cm]Tj2) {}
node[ocirc] (Bjbattery2) at ([xshift=5.5cm,yshift=-1.1cm]Tj2) {}
(Ajbattery2) to [battery, l=\scriptsize{$V^\ast_{s4}$},*-*] (Bjbattery2){}
node [rectangle,draw,minimum width=1cm,minimum height=2.4cm] (boostj2) at ($0.5*(Ajbattery2)+0.5*(Bjbattery2)-(2.5,0)$) {\scriptsize{}}
node (Aj2) at ($(Ajbattery2)+(-2.2,0)$) {}
node (Bj2) at ($(Bjbattery2)+(-2.2,0)$) {}
(Ajbattery2) to [short,i_=\scriptsize{$I_{s4}$}]($(Ajbattery2)+(-0.6,0)$){}
($(Tj)+(3.5,1.1)$) to [L, l=\scriptsize{$L_{s4}$}]($(Ajbattery2)+(-0.6,0)$){}
(Bjbattery2) to [short] ([xshift=-2.0cm]Bjbattery2);
\draw (-2.0,0) node {\scriptsize Boost};
\draw (8.4,0) node {\scriptsize Boost};
\end{circuitikz}
\caption{Electrical scheme of the RSE's DC microgrid.}
\label{fig:microgrid2}
\end{center}
\end{figure*}
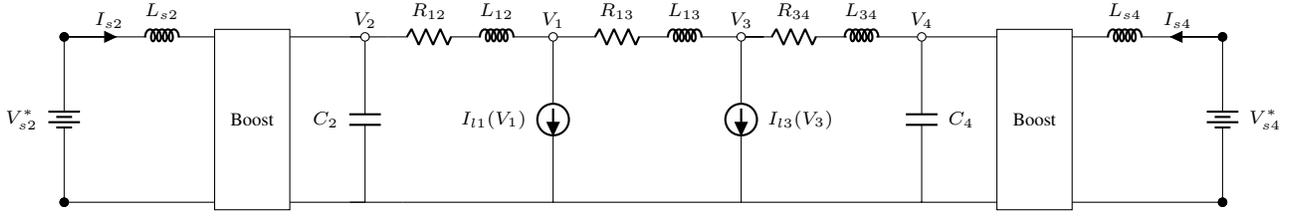

\begin{theorem}{\bf(Stability)}
\label{th:1}
Let Assumptions \ref{ass:desired_voltages}-\ref{ass:available_information} hold.
Consider system \eqref{eq:plant_extended} controlled by
\begin{equation}
\label{eq:controller}
T_c \upsilon_c = - K_c \left( u - u^\ast_d \right) - \left( \dot I_s \circ V - \dot V \circ I_s  \right),
\end{equation}
where $u^\ast_{di} = 1-V^\ast_{si}/V^\ast_{di}$ is the desired value of the duty cycle of the boost converter $i \in \mathcal{V}$, $T_c = \diag (T_{c1}, \dots, T_{cn})$, $K_c = \diag (K_{c1}, \dots, K_{cn})$ and $T_{ci}>0, K_{ci} >0$ are the gains of the controller $i \in \mathcal{V}$.
Given $V^\ast_{di} > \sqrt{P^\ast_{li}/G^\ast_{li}}$, for all $i \in \mathcal{V}$, the equilibrium $\overline z = (\overline I_s, V^\ast_d, \overline I, \vec0, \vec0, \vec0,  u^\ast_d) \in {\mathcal{Z}}_{\mathrm{ZIP}}$ is  asymptotically stable.
\end{theorem}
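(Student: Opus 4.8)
The plan is to turn the passivity property of Lemma~\ref{lemma:1} into a Lyapunov argument via the \emph{input-shaping} idea: the input port-variable of the passive map is $\upsilon_c=\dot u$, so its time integral is $u$ itself, and one augments the Krasovskii storage function~\eqref{eq:S1} by a quadratic penalty on the mismatch $u-u^\ast_d$,
\begin{equation*}
W(z) = S(\dot I_s,\dot V,\dot I) + \frac{1}{2}\left(u-u^\ast_d\right)^\top K_c\left(u-u^\ast_d\right).
\end{equation*}
I would then show (i) $W$ is locally positive definite with respect to $\overline z$, (ii) $\dot W\le 0$ along the closed loop \eqref{eq:plant_extended}, \eqref{eq:controller}, and (iii) conclude asymptotic stability by LaSalle's invariance principle.

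For (i), note first that $V^\ast_{di}>\sqrt{P^\ast_{li}/G^\ast_{li}}$ is precisely $G^\ast_{li}-P^\ast_{li}/(V^\ast_{di})^2>0$, so $G^\ast_l-[V^\ast_d]^{-2}[P^\ast_l]\succ 0$ at $\overline V=V^\ast_d$; hence $\overline z\in{\mathcal{Z}}_{\mathrm{ZIP}}$ and, by continuity, there is a neighbourhood $\mathcal N\subseteq{\mathcal{Z}}_{\mathrm{ZIP}}$ of $\overline z$ on which Lemma~\ref{lemma:1} applies. Since $S$ in the form~\eqref{eq:S1} does not see the variables $I_s,V,I$, I would invoke the equivalent expression from Remark~\ref{rm:S}: on the (forward-invariant) set of trajectories for which \eqref{eq:pe1}--\eqref{eq:pe3} hold, $S$ also equals a positive combination of $\dot I_s,\dot V,\dot I$ and of the right-hand sides $f_{I_s},f_V,f_I$ of \eqref{eq:pe1}--\eqref{eq:pe3}, so that $W$ becomes a genuine, nonnegative function of the whole state $z$. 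Then $W(z)=0$ forces $u=u^\ast_d$ together with $f_{I_s}=f_V=f_I=0$; solving these exactly as in~\eqref{eq:plant_extended_ss} and invoking the uniqueness of the steady state (Lemma~\ref{lemma:ss_solution}) gives $z=\overline z$, so $W$ is locally positive definite and $W(\overline z)=0$.

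For (ii), Lemma~\ref{lemma:1} gives, on $\mathcal N$, $\dot S = -\dot V^\top(G^\ast_l-[V]^{-2}[P^\ast_l])\dot V - \dot I^\top R\dot I + \upsilon_c^\top(\dot I_s\circ V-\dot V\circ I_s)$, while the shaping term contributes $(u-u^\ast_d)^\top K_c\dot u=(u-u^\ast_d)^\top K_c\upsilon_c$ by~\eqref{eq:pe7}; substituting the controller~\eqref{eq:controller}, the factor multiplying $\upsilon_c^\top$ collapses to $-T_c\upsilon_c$ and
\begin{align*}
\dot W ={}&-\dot V^\top\!\left(G^\ast_l-[V]^{-2}[P^\ast_l]\right)\!\dot V - \dot I^\top R\dot I \\
&- \upsilon_c^\top T_c\upsilon_c \;\le\; 0
\end{align*}
on $\mathcal N$, which already yields Lyapunov stability of $\overline z$. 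For (iii), on the largest invariant subset of $\{\dot W=0\}$ one has $\dot V\equiv\dot I\equiv 0$ and $\upsilon_c\equiv 0$; then $\ddot V\equiv 0$ and \eqref{eq:pe5} forces $(\mathds{1}_n-u)\circ\dot I_s\equiv 0$, hence $\dot I_s\equiv 0$ because $u\in[0,1)^n$; \eqref{eq:pe7} makes $u$ constant and the controller~\eqref{eq:controller} with $\dot I_s=\dot V=0$ forces $u=u^\ast_d$; finally \eqref{eq:pe1}, \eqref{eq:pe3} and \eqref{eq:pe2} with vanishing derivatives pin down $V=V^\ast_d$, $I=\overline I$ and $I_s=\overline I_s$. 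Thus $\{\overline z\}$ is the only invariant set, and asymptotic stability follows.

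I expect step (i) to be the crux: the bare storage function~\eqref{eq:S1} is degenerate along the $(I_s,V,I)$ directions, so positive definiteness of $W$ about the \emph{point} $\overline z$ is not automatic and rests on the alternative representation of $S$ in Remark~\ref{rm:S} (equivalently, on restricting to the invariant manifold defined by \eqref{eq:pe1}--\eqref{eq:pe3}), combined with the shaping term in $u$ and the uniqueness of the equilibrium. A secondary point requiring care is the local character of the statement: all the estimates, and in particular $\dot W\le 0$, hold only on a neighbourhood $\mathcal N\subseteq{\mathcal{Z}}_{\mathrm{ZIP}}$ where the load matrix $G^\ast_l-[V]^{-2}[P^\ast_l]$ stays positive definite, so one must work with a compact sublevel set of $W$ contained in $\mathcal N$.
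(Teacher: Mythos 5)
Your proposal is correct and follows essentially the same route as the paper: the identical shaped storage function $S_d=S+\tfrac{1}{2}(u-u^\ast_d)^\top K_c(u-u^\ast_d)$, the same dissipation identity $\dot S_d=-\dot V^\top\left(G^\ast_l-[V]^{-2}[P^\ast_l]\right)\dot V-\dot I^\top R\dot I-\upsilon_c^\top T_c\upsilon_c\le 0$ on a compact sublevel set inside $\mathcal{Z}_{\mathrm{ZIP}}$, and the same LaSalle chain ending in $\dot I_s=\vec0$, $u=u^\ast_d$ and $V=V^\ast_d$. The only (harmless) deviations are that you obtain $\ddot V\equiv\vec0$ on the invariant set directly from invariance of $\dot V\equiv\vec0$ rather than via the paper's computation of $V^{(3)}$, and that you spell out the local positive definiteness of $S_d$ (through the representation of $S$ in Remark~\ref{rm:S} and the uniqueness in Lemma~\ref{lemma:ss_solution}), a point the paper merely asserts as immediate.
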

\smallskip
\begin{proof}
Consider the desired closed-loop storage function
\begin{equation}
\label{eq:Sd}
S_d (\dot I_s, \dot V, \dot I, u) = S(\dot I_s, \dot V, \dot I) + \frac{1}{2} (u - u^\ast_d)^\top K_c (u - u^\ast_d),
\end{equation}
where $S$ is given by \eqref{eq:S1}. Noticing that $S_d$ is function of the entire state of the auxiliary system~\eqref{eq:plant_extended} (see Remark~\ref{rm:S}),
it is immediate to see that $S_d$ attains a minimum at the equilibrium $(\overline I_s,  V^\ast_d, \overline I, \vec0, \vec0, \vec0, u^\ast_d)$, where $\overline V = V^\ast_d$ follows from \eqref{eq:pe1} at steady state together with $\overline u = u^\ast_d$. Furthermore, $S_d$ satisfies
\begin{align}
\label{eq:dSd}
\begin{split}
\dot S_d &= - \dot V^\top \left( G^\ast_l - [V]^{-2}[P^\ast_l] \right) \dot V - \dot I^\top R \dot I \\
& \hspace{4mm}  + \upsilon_c^\top \left( K_c \left( u - u^\ast_d \right) + \dot I_s \circ V - \dot V \circ I_s \right)\\
&= - \dot V^\top \left( G^\ast_l - [V]^{-2}[P^\ast_l] \right) \dot V - \dot I^\top R \dot I - \upsilon_c^\top T_c \upsilon_c,
\end{split}
\end{align}
along the solutions to system \eqref{eq:plant_extended}. 
From the last line of~\eqref{eq:dSd} it follows that $S_d$ satisfies $\dot{S}_d \leq 0$ for all $z \in \mathcal{Z}_{\mathrm{ZIP}}$.
Then, given $\varepsilon > 0$, choose $r \in (0, \varepsilon]$ such that there exists a ball $\mathcal{B}_r(\overline z) \subset \mathcal{Z}_{\mathrm{ZIP}}$ centred in $\overline z = (\overline I_s, V^\ast_d, \overline I, \vec0, \vec0, \vec0, u^\ast_d) \in {\mathcal{Z}}_{\mathrm{ZIP}}$, i.e., 
\begin{equation*}
\mathcal{B}_r(\overline z) := \left\{ z \in {\mathcal{Z}}_{\mathrm{ZIP}}: ||z -\overline z|| \leq r \right\} \subset \mathcal{Z}_{\mathrm{ZIP}}.
\end{equation*}
 Moreover, let $\alpha$ denote the minimum value of $S_d$ on the boundary of $\mathcal{B}_r(\overline z)$, i.e., $\alpha = {\min}_{||z - \overline z|| = r} S_d(z)$. Since $S_d$ is positive definite, then $\alpha > 0$. Take $\beta \in (0,\alpha)$, then the set
 \begin{equation*}
\Omega_\beta := \left\{ z \in \mathcal{B}_r(\overline z): S_d(z) \leq \beta \right\}
\end{equation*}
is compact, positively invariant and in the interior of $\mathcal{B}_r(\overline z)$ (see \cite[Theorem 4.1]{khalil2002nonlinear}).
Let now $E$ denote the set of all points in $\Omega_\beta$ where $\dot{S}_d = 0$, i.e.,
\begin{equation*}
E := \left\{ z \in \Omega_\beta: \dot V = \vec0, \dot I = \vec0, \upsilon_c = \vec0 \right\}.
\end{equation*}
Moreover, let ${M}$ be the largest invariant set in $E$. Then, by LaSalle's invariance principle \cite[Theorem 4.4]{khalil2002nonlinear}, every solution starting in $\Omega_\beta$ approaches $M$ as $t$ approaches infinity. As a consequence, in the largest invariant set $M$, from \eqref{eq:pe4} and \eqref{eq:pe6} we obtain $\ddot I_s = \vec0$ and $\ddot I = \vec0$, respectively. Moreover, a straightforward  computation shows that in $M$ the third-time derivative of the voltage $V$ satisfies
\begin{equation*}
CV^{(3)} = -\left( T_c^{-1}[I_s]^2 + \left( G^\ast_l - [V]^{-2}[P^\ast_l] \right) \right)\ddot V,
\end{equation*}
along the solutions to system \eqref{eq:plant_extended}, implying that also $\ddot V$ is equal to zero in $M$. Consequently, from \eqref{eq:pe5} we obtain $\dot I_s=\vec0$ in $M$, and from \eqref{eq:controller} we can conclude that, in the largest invariant set $M$, $\overline u = u^\ast_d$, implying from \eqref{eq:pe1} that $V$ asymptotically converges to $V^\ast_d$. We finally conclude the proof observing from \eqref{eq:pe2} and \eqref{eq:pe3} that also $I_s$ and $I$ converge to a constant value satisfying \eqref{eq:plant_extended_ss}.
\end{proof}

\begin{figure}
\centering
\includegraphics[width=0.9\columnwidth]{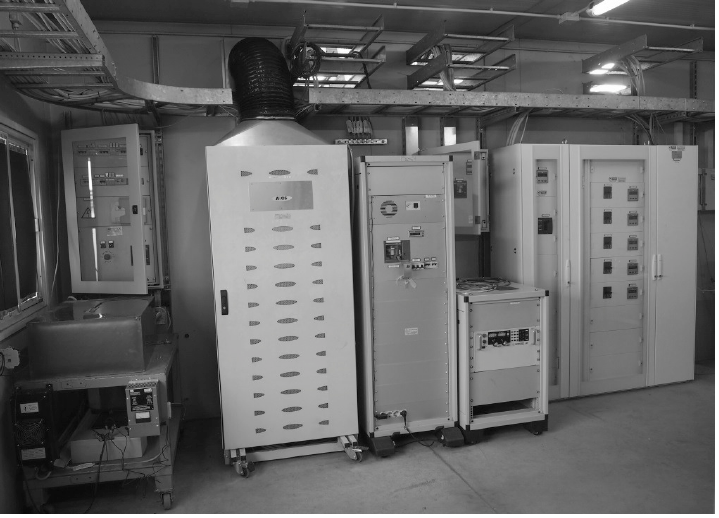}
\caption{Photo of the RSE's DC microgrid.}
\label{fig:RSE1}
\end{figure}
\begin{figure}
\centering
\includegraphics[width=0.71\columnwidth]{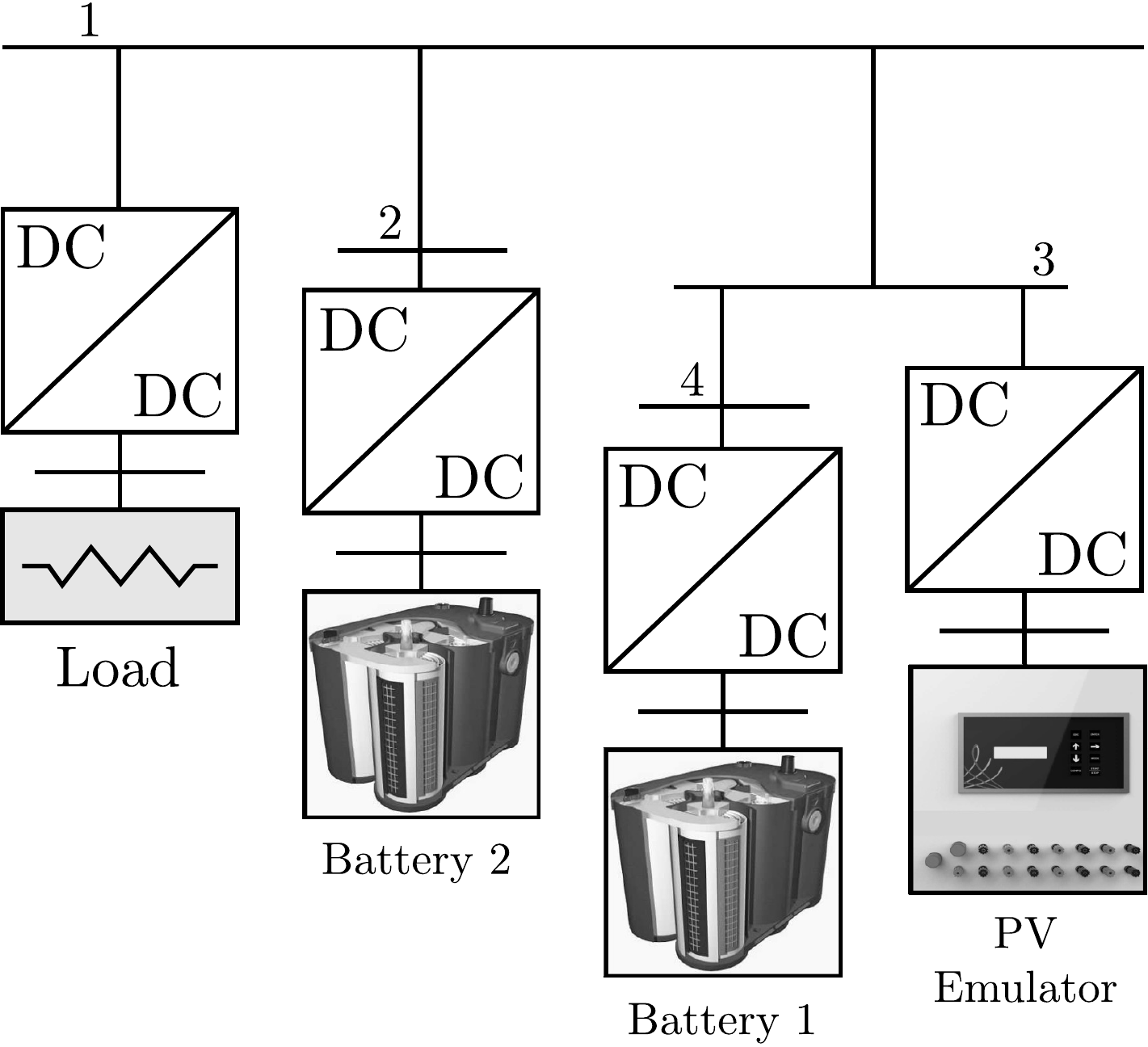}
\caption{Layout of the RSE's DC microgrid.}
\label{fig:RSE2}
\end{figure}

\begin{remark}{\bf(Robustness property)}
\label{rm:robustness}
Note that controller~\eqref{eq:controller} requires the first-time derivative of the current $I_s$ and voltage $V$, which can be estimated in finite time by implementing for instance the well known Levant's differentiator~\cite{Levant03}. Moreover, we observe that the use of $\dot I_s, \dot V$ gives robustness properties to the proposed  controller with respect to loads, lines and boost parameter uncertainty. Controller~\eqref{eq:controller} requires indeed only the knowledge of $u^\ast_d$, which depends on $V^\ast_s$.
\end{remark}

\begin{remark}{\bf(\lq ZI' loads)}
We observe that in case of only \lq ZI' loads, i.e., $P^\ast_l=\vec0$, the results developed in this section can be strengthened. The absence of constant power loads implies indeed that the passivity property of system~\eqref{eq:plant_extended} and the result of Theorem~\ref{th:1} hold in the whole set $\mathcal{Z}$. This immediately follows by noticing that $P^\ast_l=\vec0$ implies $\mathcal{Z}_{ZIP} \equiv \mathcal{Z}$.
\end{remark}
\section{Experimental Results}
\label{sec:experiments}

\begin{table}
\centering
\caption{RSE DC Microgrid parameters}
{\begin{tabular}{cccl}
\toprule			
Symbol				&Value &Unit			&Description\\
\midrule
{${V_{s2}},{V_{s4}}$}		&{278} &{\SI{}{\volt}}		&{Batteries voltage source}\\
${V^{\ast}_d}$		&380 &\SI{}{\volt}		&Desired voltage\\
${R_{12}}$				&250 &\SI{}{\milli\ohm}	& Line resistance 1-2\\
${R_{13}}$				&39 &\SI{}{\milli\ohm}	& Line resistance 1-3\\
${R_{34}}$				&250 &\SI{}{\milli\ohm}	& Line resistance 3-4\\
${L_{12}}$				&140 &\SI{}{\micro\henry}	& Line inductance 1-2\\
${L_{13}}$				&86 &\SI{}{\micro\henry}	& Line inductance 1-3\\
${L_{34}}$				&140 &\SI{}{\micro\henry}	& Line inductance 3-4\\
${C_{2}},{C_{4}}$				&6.8 &\SI{}{\milli\farad}	& Output capacitances\\
${L_{s2}},{L_{s4}}$				&1.12 &\SI{}{\milli\henry}	& Input inductances\\
$f_{\mathrm sw}$		&4 &\SI{}{\kilo\hertz}		& Switching frequency\\
\bottomrule
\end{tabular}}
\label{tab:parameters1}
\end{table}

In order to validate the proposed control scheme, experimental tests have been performed on the DC microgrid test facility at RSE. The electrical scheme, a photo and the layout of the setup are shown in Figs.~\ref{fig:microgrid2}--\ref{fig:RSE2}.
The RSE's DC microgrid is unipolar with a nominal voltage of 380 V and includes a ZIP load, with a maximum power of 30 kW at 400 V, a DC generator with a maximum power of 30 kW, which emulates a PV plant, and two storage devices, based on high temperature NaNiCl batteries, each of them with an energy of 18 kWh and a maximum power of 30 kW for 10~s. The batteries are connected to the DC network through 35~kW bidirectional boost converters. All the other parameters of the RSE's DC microgrid are reported in Table~\ref{tab:parameters1}.

In order to regulate the voltages $V_2$ and $V_4$ at the nodes 2 and 4 towards the corresponding desired value ${V^\ast_d}=$ \SI{380}{V}, the control strategy proposed in Section \ref{sec:control} (with $T_c =$ \num{1e7} and $K_c =$ \num{1e9}) is implemented through dSpace controllers.
The currents $I_{l1}(V_1)$ and $I_{l3}(V_3)$ demanded by the load and generated by the PV emulator are treated as disturbances.
In the following, we arbitrarily assume the passive sign convention\footnote{$I_{l1}(V_1) \geq 0$, $I_{l3}(V_3)\leq 0$, $I_{s1}, I_{s2} > 0$ ($I_{s1}, I_{s2} < 0$) if the batteries charge (discharge).}.

In the first scenario the system is in a steady state condition with zero power absorbed by the load or provided by the generator. Each battery converter regulates its output voltage at the desired value ${V^\ast_d}=$ \SI{380}{V}. At the time instant $t =$ \SI{5}{\second} the load (see Fig. \ref{fig:results1}) or the PV emulator (see Fig.~\ref{fig:results2}) absorbs/generates \SI{20}{\kilo\watt} until the time instant $t =$ \SI{45}{\second}. 
 From Fig. \ref{fig:results1} and in Fig. \ref{fig:results2}, one can observe that, after a transient due to the load/generator variations, the system exhibits a stable performance. This clearly shows the robustness of the proposed controller with respect to the  system uncertainty.

In the second scenario the system is in a steady state condition with a constant power equal to \SI{20}{\kilo\watt} provided by the generator. Each battery converter regulates its output voltage at the desired value ${V^\ast_d}=$ \SI{380}{V}. At the time instant $t =$ \SI{5}{\second} the desired value ${V^\ast_{d2}}$ is changed to \SI{375}{V} and at the time instant $t =$ \SI{45}{\second} also the desired value ${V^\ast_{d4}}$ is changed to \SI{375}{V} (dashed line). From Fig. \ref{fig:results4}, one can observe that the system exhibits a stable performance tracking the new desired voltage value. A similar scenario is illustrated in Fig. \ref{fig:results3}. Tracking capabilities are generally essential to couple voltage controllers with higher-level control schemes that modifies the voltage reference of each node, in order to achieve power sharing among the nodes of the microgrid. 

Finally, we note that in the discussed scenarios, only the voltages $V_2$ and $V_4$ are controlled and the deviations from the desired value during the load and generator variations are less than 4{\%}. In the uncontrolled nodes, the deviations of the voltages $V_1$ and $V_3$ from the desired value are less than 7{\%}. These deviations are due to the line impedances between the controlled and uncontrolled nodes.

\begin{figure}
\vspace{0.3cm}
\centering
\includegraphics[width=\columnwidth]{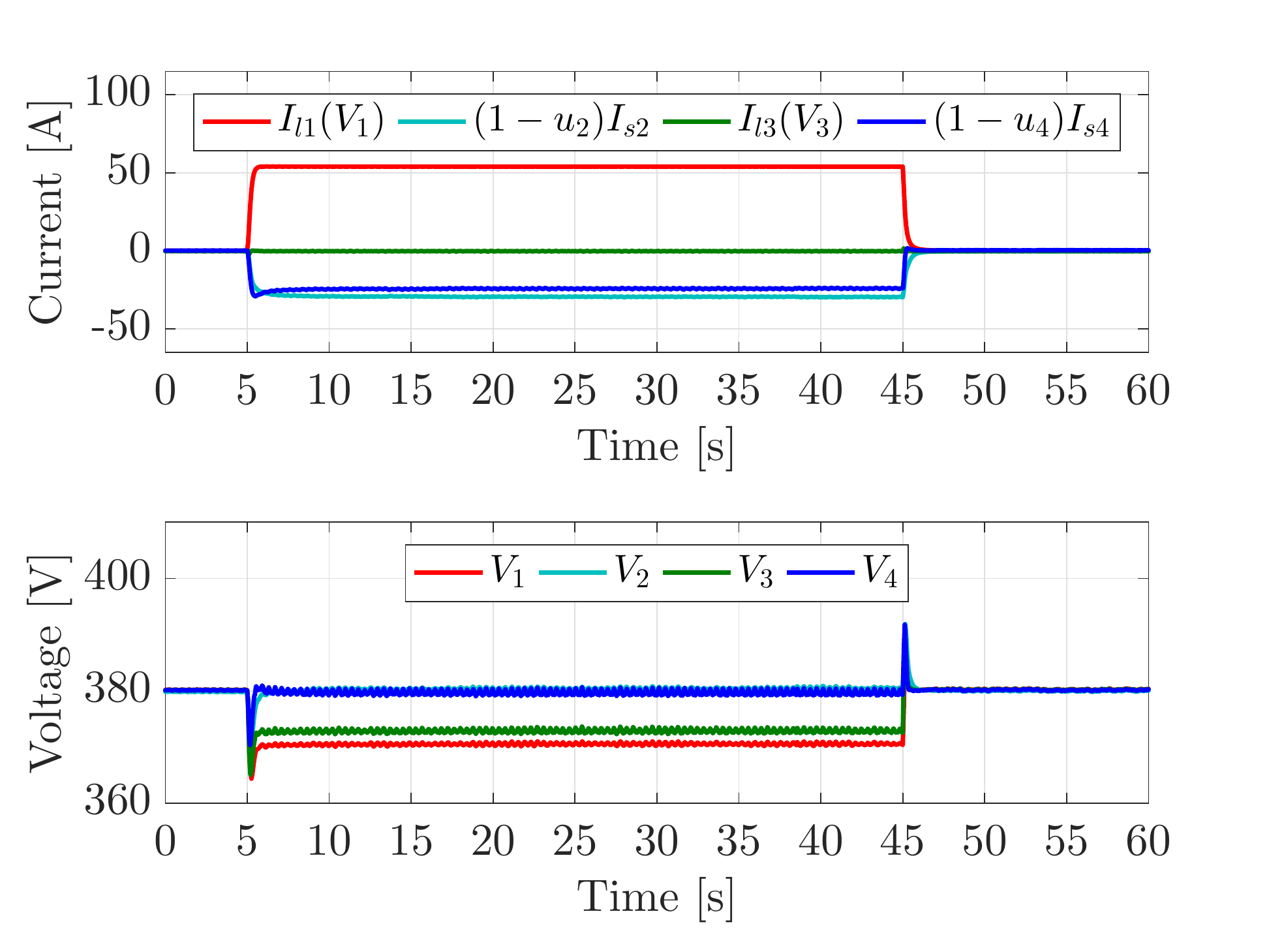}
\caption{Scenario 1: closed-loop system performance with a step load variation of  \SI{20}{\kilo\watt}.}
\label{fig:results1}
\end{figure}
\begin{figure}
\vspace{0.5cm}
\centering
\includegraphics[width=\columnwidth]{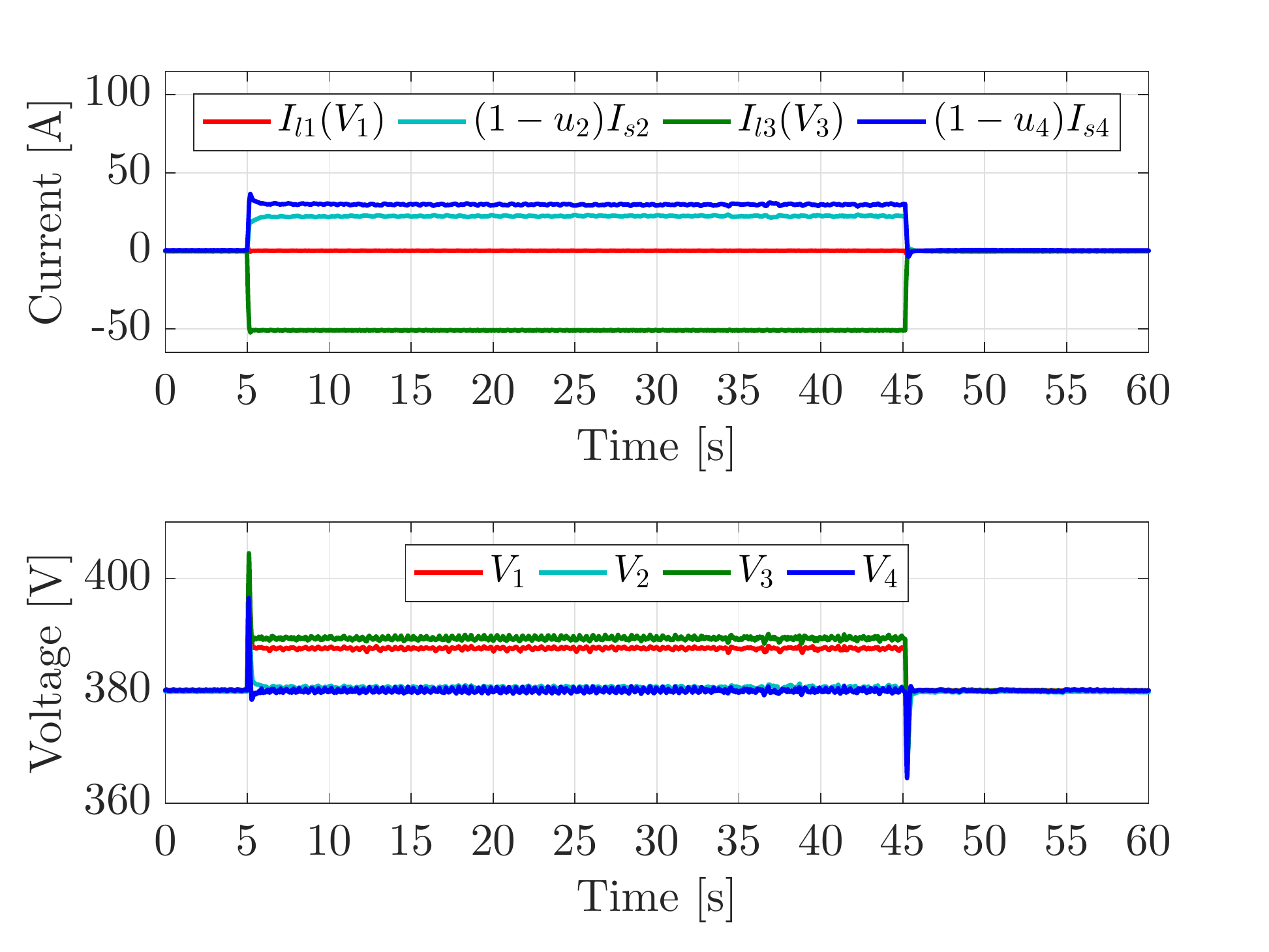}
\caption{Scenario 1: closed-loop system performance with a step generator variation of \SI{20}{\kilo\watt}.}
\label{fig:results2}
\end{figure}
\begin{figure}
\vspace{0.5cm}
\centering
\includegraphics[width=\columnwidth]{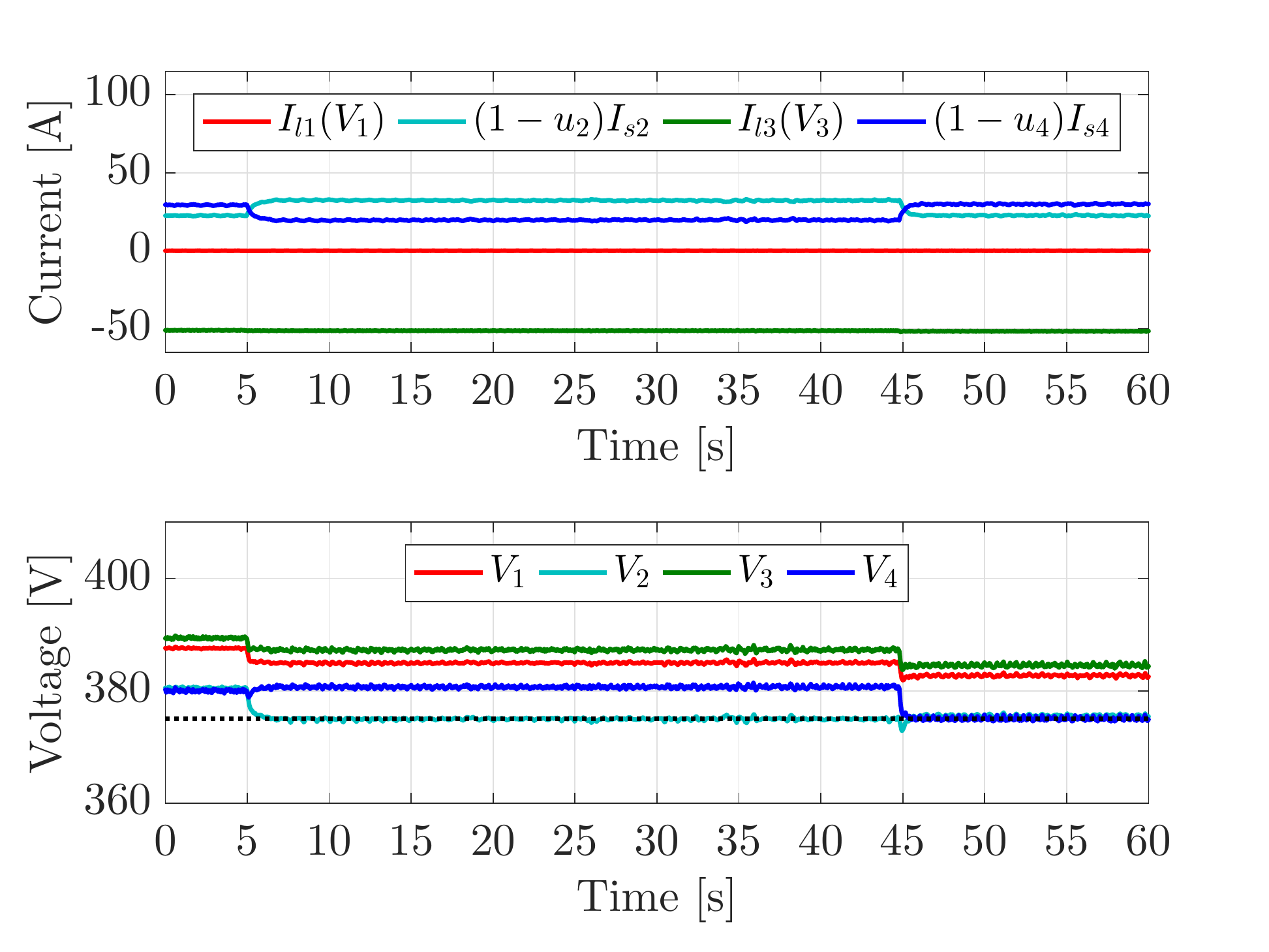}
\caption{Scenario 2: closed-loop system performance with a step reference variation of -\SI{5}{\volt}.}
\label{fig:results4}
\end{figure}
\begin{figure}
\vspace{0.5cm}
\centering
\includegraphics[width=\columnwidth]{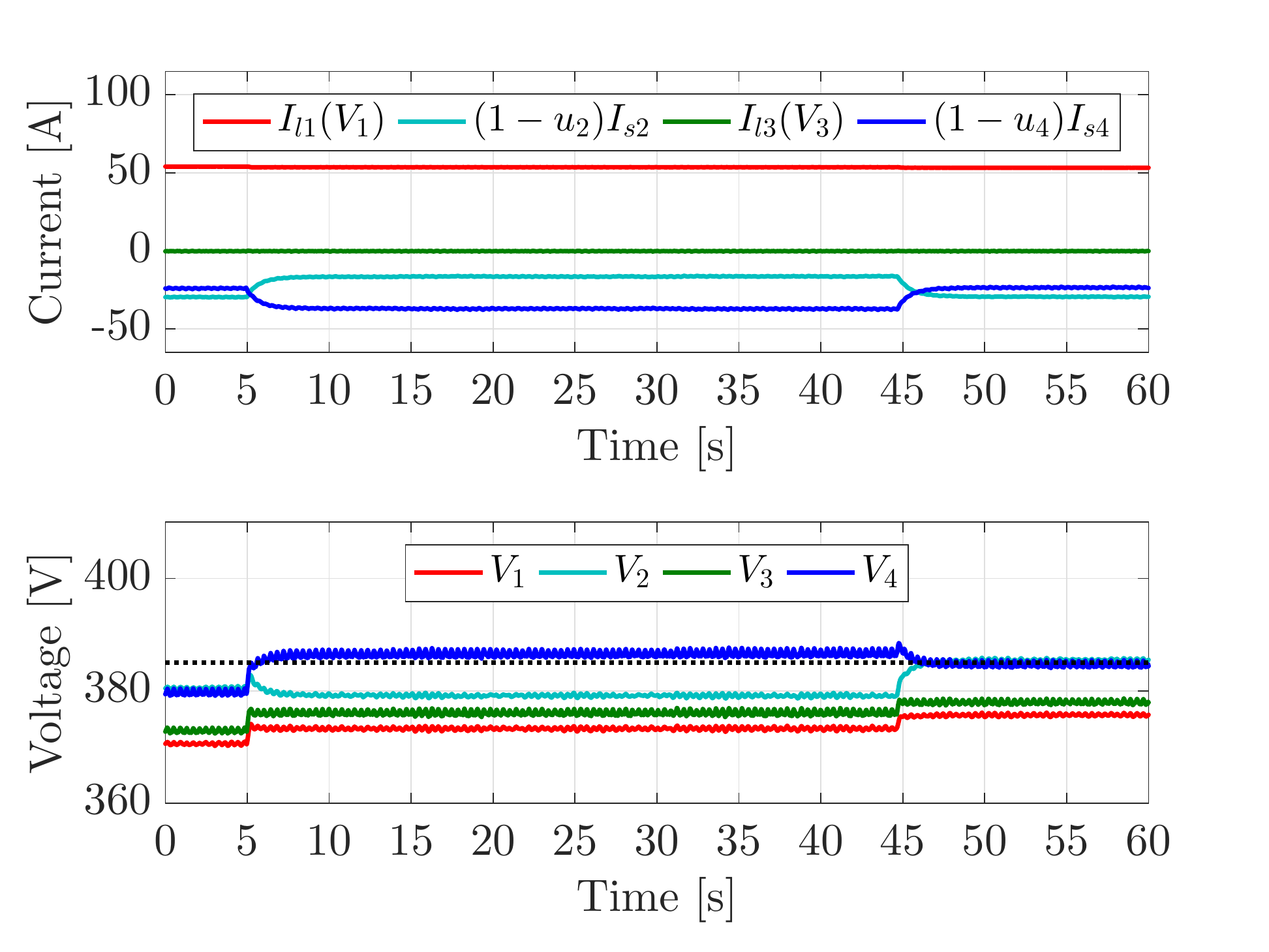}
\caption{Scenario 2: closed-loop system performance with a step reference variation of +\SI{5}{\volt}.}
\label{fig:results3}
\end{figure}
%
%
%
%
\section{CONCLUSIONS}
\label{sec:conclusions}
In this paper a decentralized passivity-based control scheme is designed to regulate the voltage of a DC microgrid through \emph{boost} power converters. 
Using a Krasovskii-type storage function, a (local) passivity property for the considered DC microgrid is established.
More precisely, the integrated input port-variable is used to shape the closed loop storage function.
Convergence to the desired equilibrium is proven in presence of the so-called \lq ZIP' (constant impedance \lq Z', constant current \lq I' and constant power \lq P') loads, showing robustness with respect to system parameter uncertainties.
The proposed control scheme is validated through experimental tests on a real DC microgrid, showing excellent closed-loop performances.
\bibliographystyle{IEEEtran}

\bibliography{ECC19ref}

\end{document}